\documentclass[conference]{IEEEtran}
\makeatletter
\def\ps@IEEEtitlepagestyle{%
  \def\@oddfoot{\smash{\parbox[b][0pt][b]{\textwidth}{\footnotesize\raggedright
    Accepted for publication at the Edge Intelligence Workshop, ACM/IEEE
    Symposium on Edge Computing (SEC) 2026. \copyright~2026 IEEE. Personal use
    of this material is permitted. Permission from IEEE must be obtained for
    all other uses.}}}%
  \def\@evenfoot{}\def\@oddhead{}\def\@evenhead{}}
\makeatother

\usepackage{cite}
\usepackage{amsmath,amssymb,amsfonts}
\usepackage{amsthm}
\usepackage{mathtools}
\usepackage{graphicx}
\usepackage{booktabs}
\usepackage{xcolor}
\usepackage{hyperref}
\usepackage[capitalize,noabbrev]{cleveref}

\let\citep\cite
\let\citet\cite

\newcommand{\EE}{\mathbb{E}}
\newcommand{\tdraft}{t_{d}}
\newcommand{\tverify}{t_{v}}
\newcommand{\tar}{t_{\mathrm{ar}}}
\newcommand{\rtt}{\mathrm{RTT}}

\theoremstyle{plain}
\newtheorem{proposition}{Proposition}

\theoremstyle{remark}
\newtheorem{remark}[proposition]{Remark}

\begin{document}

\title{Speculation at a Distance: Where Edge-Cloud Speculative Decoding Actually Pays Off}

\author{\IEEEauthorblockN{Yuan Lyu, Bharath Irukulapati, and Jaya Prakash Champati}
\IEEEauthorblockA{\textit{Department of Computer Science} \\
\textit{University of Victoria} \\
Victoria, BC, Canada \\
yuanlyu@uvic.ca, bharath.irukulapati@gmail.com, jpchampati@uvic.ca}
}

\maketitle

\begin{abstract}
Speculative decoding (SD) accelerates LLM inference by $1.5$--$3\times$ when the
draft and target models are co-located. 
This has motivated a distributed variant (DSD) that places the draft model on an
edge device while the target stays in the cloud.
We show with closed-form expressions that DSD's per-request latency benefit
is limited in the WAN edge-cloud settings analyzed here. If the server can host
both models, co-located SD has lower latency and communication than synchronous
DSD, with the same per-output FLOPs and model-weight memory.
Pipelining can make DSD competitive with co-located SD only when communication
fits within the edge drafting time window and wasted pre-drafting remains
sufficiently small; the representative WAN settings analyzed here do not
satisfy this condition.
Against cloud autoregressive decoding, DSD can reduce latency only inside a
bounded window given the target-model speed, acceptance rate, drafting time, and
communication overhead.
DSD is also infeasible against closed-source APIs without a verifier-only interface. 
The main case for DSD appears in multi-tenant capacity.
Under cross-client overlap, offloading draft compute lets a saturated cloud
server sustain $(1 + \gamma\,t_d/t_v)\times$ more concurrent clients at the same per-client
rate, where $\gamma$ is the speculation length and $t_d, t_v$ are the per-step draft
and verification times. DSD should therefore be evaluated primarily by multi-tenant
capacity and server throughput, not only by single-request latency.
\end{abstract}

\begin{IEEEkeywords}
Distributed speculative decoding, distributed inference, edge-cloud, LLM serving, latency
\end{IEEEkeywords}

\section{Introduction}
\label{sec:intro}
Speculative decoding (SD) has become a standard tool for accelerating large
language model (LLM) inference \citep{leviathan2023fast,chen2023accelerating}.
The standard \textit{co-located SD}, which places a small \emph{draft} model and
a large \emph{target} model on the same hardware, exploits the asymmetry between
sequential token generation and parallel verification to deliver
$1.5$--$3\times$ wall-clock speedups~\citep{xia2024survey} while guaranteeing the
tokens are generated according to the target distribution~\citep{leviathan2023fast}.

Driven by the proliferation of capable small language models on edge devices and
their on-device performance, a growing body of work proposes a
\emph{distributed} variant of SD (DSD): the draft models run on edge devices
such as smartphones, gateways, or on-premise servers, while the target models run in data
centers, with each verification round involving communication over a network
\citep{ning2025dssd,zhang2026picospec,yu2025dsd,li2025sled,koh2025resource,zheng2025tslt}.

Existing DSD systems each attack a different bottleneck in the system, but share a common
evaluation gap. DSSD~\citep{ning2025dssd} splits verification between edge
and cloud, sending the vocabulary distribution on the downlink only when a draft
is rejected rather than on every uplink round; TSLT~\citep{zheng2025tslt}
sends only a truncated candidate
set of logits on the uplink; PicoSpec~\citep{zhang2026picospec} applies
asynchronous pipelining to overlap edge drafting with cloud verification;
DSD~\citep{yu2025dsd} applies an Adaptive Window Control policy that
dynamically adjusts the speculation window; SLED~\citep{li2025sled} batches
verification across multiple edge clients to raise server throughput;
SpecEdge~\citep{park2025specedge} disaggregates drafting to
consumer-grade GPUs on edge devices, hiding communication and computation
latency by overlapping proactive edge-side drafting and server-side
pipeline-aware scheduling; PipeSD~\citep{han2026pipesd} adds token-batch
pipeline scheduling and adaptive verification triggering; and
Koh et al.~\citep{koh2025resource} propose resource-aware parallel
speculative decoding across heterogeneous devices. 
Reported gains span $1.1$--$2.9\times$ latency speedups in DSSD, PicoSpec, PipeSD, and DSD, or
$2.2\times$ aggregate throughput in SLED and SpecEdge. 
Most of the above works measured their proposed DSD method against
\textit{cloud autoregressive decoding (AR)} or against other DSD methods; see Table~\ref{tab:baselines}.
Only DSD~\citep{yu2025dsd} and SpecEdge~\citep{park2025specedge} compare against
co-located SD, and in both cases the distributed advantage disappears once the
round-trip time (RTT) grows. We list the thresholds in Table~\ref{tab:baselines}
and analyze them in \S\ref{sec:pipelining}.

\begin{table}[t]
\caption{Baselines used in the representative DSD systems. Only DSD and
SpecEdge evaluate against co-located SD (in bold).}
\label{tab:baselines}
\centering
\footnotesize
\setlength{\tabcolsep}{3pt}
\begin{tabular*}{\columnwidth}{@{\extracolsep{\fill}}lcccc@{}}
\toprule
System & Cloud AR & Co-located\ SD & Other DSD & Layer split\textsuperscript{$*$} \\
\midrule
DSSD~\citep{ning2025dssd}          & $\checkmark$ & $\times$ & $\checkmark$ & $\times$ \\
TSLT~\citep{zheng2025tslt}         & $\checkmark$ & $\times$ & $\checkmark$ & $\times$ \\
PicoSpec~\citep{zhang2026picospec} & $\checkmark$ & $\times$ & $\checkmark$ & $\checkmark$ \\
SLED~\citep{li2025sled}            & $\checkmark$ & $\times$ & $\times$ & $\times$ \\
PipeSD~\citep{han2026pipesd}       & $\times$ & $\times$ & $\checkmark$ & $\times$ \\
Koh et al.~\citep{koh2025resource}\,\textsuperscript{\dag} & $\times$ & $\times$ & $\times$ & $\times$ \\
DSD~\citep{yu2025dsd}              & $\times$ & $\pmb{\checkmark}$\textsuperscript{\ddag} & $\checkmark$ & $\times$ \\
SpecEdge~\citep{park2025specedge}  & $\checkmark$ & $\pmb{\checkmark}$\textsuperscript{\S} & $\times$ & $\checkmark$ \\
\bottomrule
\end{tabular*}

\par\smallskip
\begin{minipage}{\columnwidth}
\footnotesize\raggedright
\textsuperscript{$*$}\,Layer split: Part of the target model offloaded to the edge.\\[2pt]
\textsuperscript{\dag}\,Koh et al.\ compare against resource-allocation
heuristics rather than against a decoding method.\\
\textsuperscript{\ddag}\,DSD reports that distributed execution wins only
below $60$~ms RTT.\\
\textsuperscript{\S}\,SpecEdge measures a latency advantage over co-located
SD only when RTT is below $50$~ms.
\end{minipage}
\end{table}

We study the DSD system in the edge-cloud WAN setting, with a wide-area network
between the edge draft and the cloud target.
Our closed-form analysis in \S\ref{sec:position} and \S\ref{sec:pipelining}
separates synchronous and asynchronous DSD. For synchronous systems such as
DSSD~\citep{ning2025dssd} and DSD~\citep{yu2025dsd}, per-request latency gains
over cloud AR exist only inside a bounded RTT window. The tolerable RTT is
essentially the expected per-round target model processing time,
minus a few fixed compute overheads. This implies that, for the illustrative parameter settings
in Table~\ref{tab:scenarios}, the resulting RTT budget excludes representative
4G/LTE and cross-region RTTs for several target speeds and acceptance rates.
For asynchronous systems such as PicoSpec~\citep{zhang2026picospec},
pipelining can mask communication only while the relevant network and
verification work fits within the edge drafting window.
Section~\ref{sec:pipelining} compares pipelined DSD with co-located SD.
Under the low-transmission-overhead model, pipelined DSD is faster 
only when the round trip time is less than the edge drafting time per round.
Among the systems and operating conditions surveyed here, no existing DSD system
demonstrates a latency advantage over co-located SD when the RTT exceeds its
edge drafting window, which is around $50$~ms in the reported configurations.
The pipelined DSD systems that report gains at higher RTT measure against cloud AR or
other DSD systems, but never against co-located SD.
 
DSD does, however, retain one advantage that our analysis makes precise: under
multi-tenant serving, 
offloading draft compute to the edge lets the cloud sustain more concurrent
clients at the same per-client output token rate than co-located SD.
Our contribution is to make that distinction explicit against the co-located SD
baseline: DSD's capacity case is stronger than its per-request latency case,
whose gains depend on RTT and the baseline.

Below are our main contributions.
\begin{itemize}
\item A closed-form per-request analysis in~\S\ref{sec:position} showing that
co-located SD has lower latency and communication than synchronous DSD.
Section~\ref{sec:pipelining} extends the latency comparison to pipelined DSD
in the analyzed WAN regime. Against cloud AR, DSD reduces latency only within
a bounded RTT window.
\item A multi-tenant capacity result, Prop.~\ref{prop:throughput}, that
quantifies DSD's capacity benefit---$(1 + \gamma\,\tdraft/\tverify)\times$ more
concurrent clients per cloud server---and reframes DSD as a system-capacity
technique.
\item Result-by-result support from published measurements:
the $350{+}$ co-located SD experiments of Yan et al.~\citep{yan2025decoding} validate our 
effective-time model;
DSSD's network results trace the cloud-AR break-even window of
Prop.~\ref{prop:lat-ar};
SLED~\citep{li2025sled} and SpecEdge~\citep{park2025specedge} support DSD's
server-capacity benefit, while CoSine~\citep{gao2025cosine}
supports the same resource-allocation principle in a datacenter setting.
\end{itemize}

Table~\ref{tab:comparison} summarizes the closed-form comparisons in
\S\ref{sec:position}.

\begin{table}[t]
\caption{Analytical comparison summarizing Prop.~\ref{prop:coloc-dom},
Prop.~\ref{prop:lat-ar}, Rem.~\ref{rem:api-ar}, and
Prop.~\ref{prop:throughput}. Arrows indicate the favorable direction; \textbf{bold}
marks the best entry in each row. 
}
\label{tab:comparison}
\centering
\small
\setlength{\tabcolsep}{5pt}
\begin{tabular}{@{}lccc@{}}
\toprule
Metric & Cloud AR & Co-located SD & DSD \\
\midrule
\multicolumn{4}{@{}l}{\emph{Per-request / configuration}} \\
Per-token latency $\downarrow$        & Medium       & \textbf{Low}  & High\textsuperscript{\dag} \\
Per-token server time $\downarrow$    & High         & Medium        & \textbf{Low} \\
Total memory $\downarrow$             & \textbf{Low} & Medium        & Medium \\
Network per round $\downarrow$        & Low          & \textbf{None} & High \\
Closed-API support                    & $\checkmark$ & $\checkmark$  & $\times$\textsuperscript{\ddag} \\
\midrule
\multicolumn{4}{@{}l}{\emph{Multi-tenant}} \\
Concurrent clients $\uparrow$         & ---$^*$ & Baseline & \textbf{High}\textsuperscript{\S} \\
\bottomrule
\end{tabular}
\par\smallskip
\footnotesize
\textsuperscript{\dag}\,Medium latency only inside the narrow RTT window of
Prop.~\ref{prop:lat-ar}. \quad
\textsuperscript{\ddag}\,See Remark~\ref{rem:api-ar}.
\textsuperscript{\S}\,$(1{+}\gamma\tdraft/\tverify)\times$ co-located SD under cross-client overlap; see Prop.~\ref{prop:throughput}.\quad
$^*$\,The capacity comparison concerns where draft compute runs;
cloud AR has no draft model and lies outside this comparison.
\end{table}

\section{Background and Setting}
\label{sec:background}
We compare three configurations, all serving the same target model $M_t$. The
simplest is \emph{cloud AR}: the target runs alone on the cloud server with no
drafting and no speculation, generating one token per forward pass. Let $\tar$
denote its per-token wall-clock cost. Table~\ref{tab:notation} collects the notation used
throughout.

\begin{table}[t]
\caption{Notation used throughout the paper.}
\label{tab:notation}
\centering
\footnotesize
\setlength{\tabcolsep}{4pt}
\begin{tabular}{@{}ll@{}}
\toprule
Symbol & Meaning \\
\midrule
\multicolumn{2}{@{}l}{\emph{Models and distributions}}\\
$M_t$, $M_s$ & target model and small draft model \\
$|M_t|$, $|M_s|$ & weight-memory footprints in bytes \\
$p(\cdot\mid x_{<i})$ & target next-token distribution \\
$q(\cdot\mid x_{<i})$ & draft next-token distribution \\
$|V|$ & vocabulary size \\
\midrule
\multicolumn{2}{@{}l}{\emph{Speculation}}\\
$\gamma$ & speculation length in draft tokens per round \\
$x_1,\dots,x_\gamma$ & proposed draft tokens \\
$r_i$ & per-token acceptance prob.\ $\min(1,p(x_i)/q(x_i))$ \\
$\alpha$ & per-position acceptance probability~\eqref{eq:alpha} \\
$A$ & output tokens per round, $A\in\{1,\dots,\gamma{+}1\}$ \\
$\EE[A]$ & expected output tokens per round~\eqref{eq:expected} \\
\midrule
\multicolumn{2}{@{}l}{\emph{Per-token / per-round timing}}\\
$\tar$ & cloud-AR per-token wall-clock time \\
$\tdraft$ & time per draft token \\
$\tverify$ & time for one forward pass verifying $\gamma$ tokens \\
$T_{\text{eff}}^{\text{coloc}},T_{\text{eff}}^{\text{dsd}},T_{\text{eff}}^{\text{pipe}}$
& mean wall time per output token~\eqref{eq:coloc_eff},\eqref{eq:dist_eff},\eqref{eq:pipe_eff} \\
$T_{\mathrm{round}}^{(\cdot)}$ & per-round wall time, $T_{\text{eff}}^{(\cdot)}\EE[A]$ \\
$w$ & speculative-waste fraction under pipelining \\
\midrule
\multicolumn{2}{@{}l}{\emph{Network}}\\
$\rtt$ & payload-independent round-trip time \\
$T_{\mathrm{tx}}(\gamma)$ & transmission time, $\gamma b/R$~\eqref{eq:ttx} \\
$b$, $b_{\mathrm{prob}}$ & per-draft payload, bytes per prob.\ value \\
$R$ & link bandwidth \\
$\rtt_{\max}$ & break-even network budget~\eqref{eq:rtt_max} \\
\midrule
\multicolumn{2}{@{}l}{\emph{Compute (FLOPs) and server throughput}}\\
$C_{\mathrm{AR}},C_{\mathrm{draft}},C_{\mathrm{verify}}$ & per-token / per-round FLOPs \\
$c$ & draft-to-AR FLOP ratio $C_{\mathrm{draft}}/C_{\mathrm{AR}}$ \\
$r$ & required long-run output rate per admitted client \\
$N_{\mathrm{ar}},N_{\mathrm{coloc}},N_{\mathrm{dsd}}$ & maximum clients sustainable at rate $r$ \\
\midrule
\multicolumn{2}{@{}l}{\emph{Hypothetical API cost; see Rem.~\ref{prop:hypo-api}}}\\
$p_{\mathrm{in}},p_{\mathrm{out}}$ & per-token input / output price \\
$F_{\mathrm{ver}}(\gamma)$ & per-call verification fee \\
\bottomrule
\end{tabular}
\end{table}

\subsection{Co-located SD} A small draft model $M_s$ and large target model
$M_t$ have weight-memory footprints $|M_s|$ and $|M_t|$ bytes and token
distributions $q(\cdot \mid x_{<i})$ and $p(\cdot \mid x_{<i})$ respectively. The draft model proposes $\gamma$ tokens
$x_1,\ldots,x_\gamma \sim q$; the target model verifies them in one forward pass,
accepting each $x_i$ with probability
$r_i = \min(1, p(x_i)/q(x_i))$. The \emph{per-position acceptance probability}
$\alpha$ is the expectation of this acceptance rule over the draft distribution:

\begin{equation}
    \alpha \;\triangleq\; \EE_{x \sim q}\!\left[\min\!\left(1,\, \tfrac{p(x)}{q(x)}\right)\right]
    \;=\; \sum_{x} \min\!\big(q(x),\, p(x)\big) 
    \label{eq:alpha}
\end{equation}
Following \citet{leviathan2023fast}, we use a constant conditional
acceptance model. For every draft position $i$, the probability of accepting
$x_i$, conditioned on accepting $x_1,\ldots,x_{i-1}$, is $\alpha$, where
$0\le\alpha\le1$.

Let $A \in \{1,\ldots,\gamma+1\}$ be the number of tokens produced per round.
It consists of $A-1$ accepted draft tokens plus one additional target token.
The additional token is a correction after the first rejection or a bonus token
when all $\gamma$ drafts are accepted. The event $\{A\ge a\}$ occurs exactly
when the first $a-1$ drafts are accepted. Therefore,
\begin{equation}
    P(A\ge a)=\alpha^{a-1}.
    \label{eq:tail}
\end{equation}
Applying the tail-sum formula gives
\begin{equation}
\EE[A]=\sum_{k=0}^{\gamma}\alpha^k
=\begin{cases}
(1-\alpha^{\gamma+1})/(1-\alpha), & 0\le\alpha<1,\\
\gamma+1, & \alpha=1.
\end{cases}
\label{eq:expected}
\end{equation}

Let $\tdraft$ denote the time per draft token and $\tverify$ the time to verify
the $\gamma$ tokens in parallel. To isolate placement and communication, the
comparisons below use the same fixed, contention-free $\tdraft$ and $\tverify$
for the same models in both placements. If measured service times differ by
placement, the corresponding configuration-specific values must be substituted.
For any configuration $(\cdot)$, we define $T_{\mathrm{eff}}^{(\cdot)}$ as its
expected wall-clock time per generated output token: the wall-clock time of one
decoding round divided by the expected number $\EE[A]$ of output tokens produced
in that round. Its reciprocal, $1/T_{\mathrm{eff}}^{(\cdot)}$, is the
corresponding per-request token throughput. For co-located SD,
\begin{equation}
    T_{\text{eff}}^{\text{coloc}} = \frac{\gamma\,\tdraft + \tverify}{\EE[A]}.
    \label{eq:coloc_eff}
\end{equation}
We adopt the \emph{memory-bound assumption} $\tverify \approx \tar$: a single
verification pass over the $\gamma$ proposed tokens is dominated by HBM-to-SM
weight transfer rather than FLOPs, so it runs in roughly the same wall-clock as
one autoregressive step.

Published single-node measurements support this round-time decomposition.
Yan et al.~\citep{yan2025decoding} report the Tokens-Accepted Rate, or TAR, with
TAR$\equiv\EE[A]$; the time to generate $\gamma$ draft tokens; the
time for the target to verify them; and measured throughput for $350{+}$
experiments on OPT-66B and LLaMA-65B targets using 4$\times$A100 GPUs. Substituting
their measured draft and verification times for $\gamma\tdraft$ and
$\tverify$ in~\eqref{eq:coloc_eff} gives the predicted throughput
$1/T_{\mathrm{eff}}^{\mathrm{coloc}}$, which closely matches their
measurements. This empirical check uses the measured $\tverify$ directly and
does not require the additional approximation $\tverify\approx\tar$.

\subsection{DSD (synchronous)} When the draft model lives on the edge and the
verifier in the cloud, each round traverses the network: the payload-independent
round-trip $\rtt$, which includes propagation plus processing and is ping-measurable, plus a
payload-dependent transmission time
\begin{equation}
    T_{\mathrm{tx}}(\gamma) = \gamma \cdot b / R,
    \label{eq:ttx}
\end{equation}
where $b$ is the per-draft payload size and $R$ is the link bandwidth. The
payload size $b$ is set by what the verifier needs in order to check a draft
token, which in turn depends on the decoding mode. Under \emph{greedy}
verification the cloud only checks whether each proposed token matches the
target's argmax, so the edge sends bare token IDs: $b$ is a few bytes and the
transmission time is negligible relative to the round-trip,
$T_{\mathrm{tx}}(\gamma) \ll \rtt$, at common WAN bandwidths. Exact
\emph{distribution-preserving} verification, by contrast, must compare the target
and draft probabilities of each proposed token and resample rejected positions
from the residual $(p-q)_+$, which requires probability information over the
vocabulary. Naive logit-based protocols therefore send $b \approx |V|\,b_{\mathrm{prob}}$
bytes per draft token, where $|V|$ is the vocabulary size and $b_{\mathrm{prob}}$
the bytes per probability value. At FP16 or BF16, this is $2$ bytes per value; the payload is larger by
orders of magnitude, and $T_{\mathrm{tx}}(\gamma)$ can then dominate $\rtt$ at
lower bandwidths. Communication-efficient distributed SD introduced by
DSSD~\citep{ning2025dssd} uploads only token IDs and scalar draft probabilities,
and defers the full distribution to a downlink transmission only on rejection,
effectively reducing $b$. We use \emph{low-transmission-overhead regime} for
this communication pattern when its transfer time is small enough to omit from
the per-round latency model, written $T_{\mathrm{tx}}(\gamma)\approx0$.  
The payload is nonzero, but its expected transfer time is small relative to the
round-trip and compute terms. 
Separating the payload-independent round-trip from the payload-dependent transmission
time, the per-token effective time is:

\begin{equation}
    T_{\text{eff}}^{\text{dsd}} = \frac{\gamma\,\tdraft + \rtt + T_{\mathrm{tx}}(\gamma) + \tverify}{\EE[A]}.
    \label{eq:dist_eff}
\end{equation}

\subsection{DSD (pipelined)} Asynchronous DSD systems such as
PicoSpec~\citep{zhang2026picospec} overlap the next round's edge drafting
with the current round's network transfer and cloud verification. Thus, the
sequential sum in~\eqref{eq:dist_eff} becomes the maximum of the drafting and
cloud-side branches. A rejection can invalidate tokens drafted speculatively
for the next round, so we use a speculative-waste fraction $w\in[0,1]$ to
represent the resulting extra draft work:

\begin{equation}
    T_{\text{eff}}^{\text{pipe}} = \frac{\max\big((1+w)\,\gamma\,\tdraft,\;\rtt + T_{\mathrm{tx}}(\gamma) + \tverify\big)}{\EE[A]}.
    \label{eq:pipe_eff}
\end{equation}

Here $w=0$ represents perfect overlap without wasted pre-drafts, while larger
$w$ accounts for pipeline flushes caused by rejection.

Under the memory-bound assumption $\tverify \approx \tar$, synchronous DSD
beats the cloud-AR baseline iff $T_{\text{eff}}^{\text{dsd}} < \tar$,
equivalently the network budget is bounded by:
\begin{equation}
    \rtt + T_{\mathrm{tx}}(\gamma) < \tar\,\EE[A] - \gamma\,\tdraft - \tverify \triangleq \rtt_{\max}
    \label{eq:rtt_max}
\end{equation}

In the rest of the paper, we assume $T_{\mathrm{tx}}(\gamma)\approx0$ in
numerical examples, while in formal comparisons we retain the term.


\section{Position: The Gain Window Is Narrow}
\label{sec:position}
We evaluate DSD against the two baselines available to inference service providers,
namely \textbf{co-located SD}, where both models run on the same server, and
\textbf{cloud AR}, where only the target model runs and there is no draft model.
For each baseline, we examine DSD via closed-form inequalities in a
\emph{per-request comparison}: one active decoding request under fixed
service times $\tdraft$, $\tverify$, and $\tar$, excluding queueing delays and
resource contention from other requests. We compare stream-level latency,
per-output compute, and inter-device communication, together with two
configuration-level properties: total model-weight memory and API feasibility
or cost. Under heavy load, batching consumes
more accelerator capacity and speculative computation may cease to improve
goodput; serving systems may therefore shorten or disable speculation~\citep{liu2024turbospec}.

We also assume that the draft and target models are the same across
all configurations, so that the acceptance rate $\alpha$ and hence the expected
number of accepted tokens $\EE[A]$ are consistent.

\subsection{Comparison with Co-Located SD}


We argue that co-located SD matches or outperforms synchronous DSD across all
dimensions of this comparison.

\begin{proposition}
\label{prop:coloc-dom}
Assume that the same draft and target models, acceptance behavior, and fixed
contention-free service times apply in both configurations, and that the server
can hold both models. Then co-located SD matches or beats synchronous DSD on
every comparison dimension: latency, per-output compute (FLOPs), total
model-weight memory, and inter-device communication. 
\end{proposition}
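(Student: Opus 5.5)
The plan is to compare the two configurations one dimension at a time, using the per-round model of \S\ref{sec:background}. Under the stated assumptions the models, the acceptance behavior, and the fixed service times are the same in both placements. So the two configurations have the same $\gamma$, $\tdraft$, $\tverify$, $\alpha$, and hence the same $\EE[A]$ from \eqref{eq:expected}. The only structural difference is that each DSD round also pays the network terms $\rtt+T_{\mathrm{tx}}(\gamma)$.

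\textbf{Latency.} I would subtract \eqref{eq:coloc_eff} from \eqref{eq:dist_eff}. This gives
\begin{equation*}
T_{\text{eff}}^{\text{dsd}}-T_{\text{eff}}^{\text{coloc}}=\frac{\rtt+T_{\mathrm{tx}}(\gamma)}{\EE[A]}\ \ge 0,
\end{equation*}
since $\rtt\ge0$, $T_{\mathrm{tx}}(\gamma)=\gamma b/R\ge0$ and $\EE[A]\ge1$. Equality holds only in the degenerate zero-network case. The same argument applies to the distribution of the per-round time: the DSD round equals the co-located round plus a nonnegative shift. The number of rounds needed for a fixed output length has the same distribution in both placements, because acceptance behavior is identical. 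So the total stream latency is also dominated, not just its mean.

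\textbf{Compute and memory.} For compute, I will argue that each round runs the same $\gamma$ draft forward passes and one verification pass on the same models with the same inputs. Under identical acceptance the random output count $A$ has the same law. The expected FLOPs per output token, $(\gamma C_{\mathrm{draft}}+C_{\mathrm{verify}})/\EE[A]$, is therefore equal in the two placements; only where the compute runs differs. For memory, both configurations must store $M_s$ and $M_t$ somewhere, so the total weight memory is $|M_s|+|M_t|$ in each. The hypothesis that the server can hold both models makes this total feasible for co-located SD.

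\textbf{Communication.} Co-located SD moves nothing between devices, so its inter-device traffic is zero. DSD sends at least the $\gamma$ draft token IDs uplink and the verification outcome downlink every round, so its traffic is at least zero and strictly positive in any nontrivial run. Since co-located SD is no worse on all four dimensions, the proposition follows.

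The main obstacle is not any single calculation but making sure the comparison is fair. The argument depends on identical $\tdraft$ and on the verification pass being unaffected by where drafting happens. I would state explicitly that this is the content of the fixed, contention-free service-time assumption: an edge device with a faster draft step would break the latency inequality, and I would note that the corresponding placement-specific values must then be substituted. A second point to handle is that "matches or beats" should be read as weak inequalities, since memory and FLOPs tie exactly while latency and communication are strictly better whenever $\rtt>0$.
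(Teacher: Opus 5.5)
Your proposal is correct and follows the paper's proof essentially dimension by dimension. Latency differs by the added nonnegative term $(\rtt+T_{\mathrm{tx}}(\gamma))/\EE[A]$, FLOPs and total weight memory $|M_t|+|M_s|$ tie, and communication is zero versus at least $\gamma$ token IDs plus the acceptance outcome per round; your added observation that, with identical acceptance laws, total stream latency is dominated pathwise and not just in mean is a valid minor strengthening the paper does not state.
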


\begin{proof}
We establish each metric in turn.


\emph{(i) Latency.} Equation~\eqref{eq:dist_eff} adds the nonnegative terms
$\rtt$ and $T_{\mathrm{tx}}(\gamma)$ to~\eqref{eq:coloc_eff}, so
$T_{\mathrm{eff}}^{\mathrm{dsd}} \ge T_{\mathrm{eff}}^{\mathrm{coloc}}$, strictly
for any $\rtt > 0$. 


\emph{(ii) Compute.} Per-round FLOPs equal $\gamma\,C_{\mathrm{draft}} +
C_{\mathrm{verify}}$ in both configurations, independent of model location.

\emph{(iii) Memory.} Total model-weight memory equals $|M_t| + |M_s|$ in both
configurations; only the draft's residency differs.

\emph{(iv) Communication.} Co-located SD samples the probability distributions $p$
and $q$ from shared GPU memory and incurs zero inter-device communication.
DSD transmits at least $\gamma$ token IDs and one accept-count per
round.

By (i) and (iv), co-located SD is strictly better than synchronous DSD on latency and
communication; by (ii) and (iii), it ties on compute and memory. 
Hence synchronous DSD has no strict per-request or configuration-level
advantage.

\end{proof}


\subsection{Against Cloud AR}

Cloud AR is the baseline DSD must beat to
justify its existence. We work through latency, compute, API feasibility,
memory, and hypothetical API cost in turn.



\begin{proposition}[Synchronous DSD latency bound]
\label{prop:lat-ar}
Assume $\tverify=\tar$ and
$T_{\mathrm{tx}}(\gamma)=\gamma b/R$. Synchronous DSD has lower mean
wall time per output token than cloud AR if and only if
\begin{equation}
\rtt < \tar(\EE[A]-1)-\gamma\!\left(\tdraft+\tfrac{b}{R}\right).
\label{eq:rtt_exact}
\end{equation}
For $0\le\alpha<1$, this exact condition implies the necessary bound
\begin{equation}
\rtt < \frac{\alpha\tar}{1-\alpha}
         -\gamma\!\left(\tdraft+\tfrac{b}{R}\right).
\label{eq:rtt_upper}
\end{equation}
\end{proposition}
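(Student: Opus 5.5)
The plan is to start from the definition of beating cloud AR and reduce it algebraically to \eqref{eq:rtt_exact}. Cloud AR has mean wall time per output token $\tar$. By \eqref{eq:dist_eff}, synchronous DSD has
\[
T_{\mathrm{eff}}^{\mathrm{dsd}}=\frac{\gamma\tdraft+\rtt+T_{\mathrm{tx}}(\gamma)+\tverify}{\EE[A]} .
\]
Since $\EE[A]\ge1>0$ by \eqref{eq:expected}, multiplying the inequality $T_{\mathrm{eff}}^{\mathrm{dsd}}<\tar$ by $\EE[A]$ is an equivalence and preserves its direction. This gives
\[
\gamma\tdraft+\rtt+T_{\mathrm{tx}}(\gamma)+\tverify<\tar\EE[A].
\]
I will then substitute the two assumptions, $\tverify=\tar$ and $T_{\mathrm{tx}}(\gamma)=\gamma b/R$. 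Isolating $\rtt$ and collecting the $\gamma$ terms yields exactly \eqref{eq:rtt_exact}. Every step is reversible, so this establishes the ``if and only if.'' It is the same computation as \eqref{eq:rtt_max}, now with the assumption $\tverify=\tar$ used explicitly.

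For the necessary bound \eqref{eq:rtt_upper}, the only ingredient is an upper bound on $\EE[A]-1$ when $0\le\alpha<1$. From \eqref{eq:expected},
\[
\EE[A]-1=\sum_{k=1}^{\gamma}\alpha^k=\frac{\alpha(1-\alpha^{\gamma})}{1-\alpha}\le\frac{\alpha}{1-\alpha},
\]
with strict inequality whenever $0<\alpha<1$. The bound holds because $1-\alpha^\gamma\le1$ and $\alpha/(1-\alpha)\ge0$; equivalently, one can compare the finite geometric sum with the infinite series. Because $\tar>0$, it follows that
\[
\tar(\EE[A]-1)\le\frac{\alpha\tar}{1-\alpha}.
\]
Combining this with \eqref{eq:rtt_exact} by transitivity gives \eqref{eq:rtt_upper} as a strict inequality.

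No step is a real obstacle. The points that need care are bookkeeping. The sign of $\EE[A]$ must be positive so that multiplying through is an equivalence. The assumption $\tar>0$ is needed for the geometric bound to carry over to the scaled inequality. The case $\alpha=1$ must be excluded, because $\EE[A]=\gamma+1$ there and the bound $\alpha/(1-\alpha)$ is undefined. Finally, the direction of implication matters: \eqref{eq:rtt_upper} is stated only as a necessary condition, not a sufficient one.
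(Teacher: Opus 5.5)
Your proposal is correct and matches the paper's proof: you substitute $\tverify=\tar$ and $T_{\mathrm{tx}}(\gamma)=\gamma b/R$ into the break-even condition \eqref{eq:rtt_max}, which you re-derive from $T_{\mathrm{eff}}^{\mathrm{dsd}}<\tar$ using $\EE[A]\ge 1$, and then bound $\EE[A]-1=\alpha(1-\alpha^{\gamma})/(1-\alpha)\le\alpha/(1-\alpha)$. Your extra bookkeeping on the signs of $\EE[A]$ and $\tar$ only makes explicit what the paper leaves implicit.
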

\begin{proof}
Substitute $\tverify=\tar$ and
$T_{\mathrm{tx}}(\gamma)=\gamma b/R$ into
Eq.~\eqref{eq:rtt_max}. For $\alpha<1$,
$\EE[A]-1=\alpha(1-\alpha^\gamma)/(1-\alpha)
\le\alpha/(1-\alpha)$, which gives Eq.~\eqref{eq:rtt_upper}.
\end{proof}


Equation~\eqref{eq:rtt_exact} is the exact break-even condition under the
stated timing model. Equation~\eqref{eq:rtt_upper} is a looser 
bound expressed only through $\alpha$. In the low-transmission-overhead regime
defined in \S\ref{sec:background}, we set $b/R\approx0$ in
Eq.~\eqref{eq:rtt_exact}.

Table~\ref{tab:scenarios} illustrates the exact break-even condition
in~\eqref{eq:rtt_max} for $\gamma=5$, $\tdraft=10$~ms,
$\tverify=\tar$, and the low-transmission-overhead regime defined in~\S\ref{sec:background}. The feasible window contracts as the
cloud target becomes faster or the acceptance rate
falls.

\begin{table}[t]
\centering
\small
\caption{Break-even RTT (ms) from~\eqref{eq:rtt_max}. A dash denotes
$\rtt_{\max}<0$: DSD is slower than cloud AR even at zero RTT, so no feasible
break-even RTT exists.}
\label{tab:scenarios}
\begin{tabular}{@{}lcccc@{}}
\toprule
Cloud AR $\tar$ & $\alpha=0.5$ & $\alpha=0.7$ & $\alpha=0.85$ & $\alpha=0.9$ \\
\midrule
Slow, $100$~ms      & $47$ & $144$ & $265$ & $319$ \\
Standard, $50$~ms   & ---  & $47$  & $108$ & $134$ \\
Fast, $30$~ms       & ---  & $8$   & $45$  & $61$  \\
Very fast, $20$~ms  & ---  & ---   & $13$  & $24$  \\
\bottomrule
\end{tabular}
\end{table}

At an illustrative 4G RTT of ${\sim}60$~ms, the $\tar=100$~ms target
requires roughly $\alpha\ge0.7$, the $\tar=50$~ms target requires roughly
$\alpha\ge0.85$, and the faster targets have no feasible point in the table.
At an illustrative cross-region RTT of ${\sim}80$~ms, targets with
$\tar\le30$~ms are infeasible throughout the displayed acceptance range, both
are operating points instead of direct measurements. Reported 4G access latency
is $30$--$40$~ms before account for the wide-area path to a cloud
region~\citep{varghese2021edge}, and mobile-network studies report end-to-end
RTTs of approximately $25$~ms for 4G LTE and $37$--$150$~ms for 5G, depending on
the operator and location~\citep{song2023london5g}.

Published DSSD measurements exhibit the same failure mode~\citep{ning2025dssd}.
That paper defines speedup as protocol throughput divided by target-only LLM
throughput. At 50~ms non-transmission delay, 10~Mbps bandwidth, and
$\gamma=8$, its predecessor DSD method achieves only $43\%$ of the throughput
of its OPT-6.7B cloud-AR baseline. DSSD improves this to $2.19\times$ for
OPT-6.7B and $1.62\times$ for OPT-13B by transmitting a full vocabulary
distribution only on rejection. These results show that communication design
can move a DSD system into or out of the cloud-AR break-even window; because
the baseline is cloud AR rather than co-located SD, they do not address
Prop.~\ref{prop:coloc-dom}.

\begin{proposition}[Compute]
\label{prop:compute-ar}
Let $c = C_{\mathrm{draft}}/C_{\mathrm{AR}}$ be the per-token FLOP ratio of the
draft model to cloud AR, and assume verifying $\gamma$ tokens costs
$C_{\mathrm{verify}} = \gamma\,C_{\mathrm{AR}}$ FLOPs. DSD uses strictly more
FLOPs per output token than cloud AR whenever
\begin{equation}
    \EE[A] \;=\; \frac{1 - \alpha^{\gamma+1}}{1 - \alpha} \;<\; \gamma(1+c).
    \label{eq:compute-cond}
\end{equation}
This holds for all $\alpha$ once $c \ge 1/\gamma$,
and fails only in the corner case $c < 1/\gamma$ with $\alpha \to 1$.
\end{proposition}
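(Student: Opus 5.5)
The plan is a per-round FLOP count divided by expected output. In one DSD round the edge drafts $\gamma$ tokens at $\gamma\,C_{\mathrm{draft}} = \gamma c\,C_{\mathrm{AR}}$ FLOPs. The cloud verifies them at $C_{\mathrm{verify}} = \gamma\,C_{\mathrm{AR}}$ FLOPs. Placement does not change FLOPs, as in part~(ii) of Prop.~\ref{prop:coloc-dom}. Hence the round costs $\gamma(1+c)\,C_{\mathrm{AR}}$ and yields $\EE[A]$ output tokens in expectation, using \eqref{eq:expected}. The per-output cost is
\[
\frac{\gamma(1+c)\,C_{\mathrm{AR}}}{\EE[A]},
\]
while cloud AR spends exactly $C_{\mathrm{AR}}$ per token. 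Comparing the two, DSD is strictly more expensive iff $\EE[A] < \gamma(1+c)$, which is \eqref{eq:compute-cond}. The closed form $(1-\alpha^{\gamma+1})/(1-\alpha)$ applies for $\alpha<1$; at $\alpha=1$ I will substitute $\gamma+1$ as in \eqref{eq:expected}.

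For the "holds for all $\alpha$" clause I will use the uniform bound $\EE[A] = \sum_{k=0}^{\gamma}\alpha^k \le \gamma+1$, with equality only at $\alpha=1$. If $c \ge 1/\gamma$, then $\gamma(1+c) \ge \gamma+1 \ge \EE[A]$. The inequality is strict except in the boundary case $c=1/\gamma$, $\alpha=1$, where the two costs tie. I will state this caveat explicitly, or read "for all $\alpha$" as $\alpha<1$, since the strict claim needs $\EE[A]<\gamma+1$.

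For the corner case, suppose $c<1/\gamma$, so that $\gamma(1+c)<\gamma+1$. The map $\alpha\mapsto\EE[A]$ is continuous and increasing from $1$ at $\alpha=0$ to $\gamma+1$ at $\alpha=1$. Therefore \eqref{eq:compute-cond} fails exactly when $\alpha$ is at least the unique threshold $\alpha^\ast$ solving $\EE[A]=\gamma(1+c)$, and $\alpha^\ast\to1$ as $c\to1/\gamma$. This makes precise that failure occurs only for $c<1/\gamma$ and $\alpha$ near $1$.

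The main obstacle is interpretive rather than computational. One point is the tie at $c=1/\gamma$, $\alpha=1$. The other is that "$\alpha\to1$" really means "$\alpha\ge\alpha^\ast(c,\gamma)$", and this threshold may sit noticeably below $1$ when $c\ll1/\gamma$. I would state that threshold form and note that it is a monotone-root argument, with no further estimates needed.
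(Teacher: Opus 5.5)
Your proof is correct and follows the paper's argument: the same per-round count $\gamma(1+c)C_{\mathrm{AR}}$ against $\EE[A]$ output tokens, and the same uniform bound $\EE[A]\le\gamma+1$ for the $c\ge 1/\gamma$ clause. Your two additions tighten the result. The paper's step ``$\gamma+1<\gamma(1+c)$, which results in $c\ge 1/\gamma$'' only gives $c>1/\gamma$, so the tie you flag at $c=1/\gamma$, $\alpha=1$ is a real exception. Your monotone-root threshold $\alpha^\ast(c,\gamma)$ proves the ``fails only in the corner case'' claim, which the paper asserts without proof and illustrates only with $\alpha\approx0.93$ at $\gamma=5$, $c=0$.
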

\begin{proof}
A DSD round costs $\gamma\,C_{\mathrm{draft}} + C_{\mathrm{verify}} =
\gamma(1+c)\,C_{\mathrm{AR}}$ FLOPs and yields $\EE[A]$ tokens. Therefore, its per-token
FLOPs exceed cloud AR's $C_{\mathrm{AR}}$ iff $\gamma(1+c) > \EE[A]$. Using
$\EE[A] \le \gamma + 1$, a sufficient condition for \eqref{eq:compute-cond} to hold
is $\gamma + 1 < \gamma (1+c)$, which results in $c \ge 1/\gamma$. 
\end{proof}

\begin{remark}
The corner case is operationally empty. It requires $c < 1/\gamma$ together with
$\alpha$ near $1$ --- e.g. at $\gamma = 5$, and $c = 0$ it needs $\alpha \approx 0.93$,
with higher $\alpha$ demanded as $c$ rises toward $1/\gamma$. But $c < 1/\gamma$
means the draft is nearly free to run \emph{on the server} as well, so co-located
SD's draft cost is also small and DSD offloads negligible compute even as it
``wins.'' DSD's FLOP disadvantage therefore holds in any realistic deployment.
\end{remark}

\begin{remark}[Memory]
During inference, cloud AR loads only the target model weights into
cloud memory, with footprint $|M_t|$. DSD loads the same target model weights
into cloud memory and also loads the draft model weights on the edge device,
giving a system-wide model-weight footprint of $|M_t|+|M_s|$. Thus DSD has no
system-wide model-weight memory advantage over cloud AR; it changes where the
draft model resides, not whether the draft model exists.
\end{remark}

\begin{remark}[API cost --- infeasibility]
\label{rem:api-ar}
Distribution-preserving SD against a closed-source commercial API as the
verifier requires scoring client-proposed draft tokens, where the client needs
to acquire $p(x_i \mid x_{<i})$ for each proposed draft token $x_i$ from the
target model. Mainstream closed-source APIs generally expose logprobs for
generated tokens only, not for arbitrary tokens proposed by the client. Thus
DSD is not implementable through today's closed-source APIs without a
custom verify-only endpoint. Even if such an endpoint were offered, DSD would
have no natural API-cost advantage over provider-side co-located SD: a standard
endpoint already charges $p_{\mathrm{out}}$ per generated token regardless of
whether the provider uses SD internally, while DSD would add a separate
verification cost. Pricing that cost low enough to undercut the standard
output service would cannibalize provider revenue.
\end{remark}

\begin{remark}[Cost for hypothetical verifier APIs]
\label{prop:hypo-api}
Suppose a provider offered a verify-only endpoint charging a per-call
verification fee $F_{\mathrm{ver}}(\gamma)$ (absorbing any bundled
correction/bonus), with $p_{\mathrm{in}}$ and $p_{\mathrm{out}}$ the per-token
input and output charges. A DSD round then costs $\gamma\,p_{\mathrm{in}} +
F_{\mathrm{ver}}(\gamma)$ for $\EE[A]$ tokens, so DSD is cheaper than cloud AR
iff
\begin{equation}
    \EE[A] > \frac{\gamma\,p_{\mathrm{in}} + F_{\mathrm{ver}}(\gamma)}{p_{\mathrm{out}}}.
    \label{eq:hypo-breakeven}
\end{equation}
The right-hand side of~\eqref{eq:hypo-breakeven} is the cost of one
verification round, expressed in units of the ordinary output-token price. DSD
is economical only when that normalized round cost is smaller than the expected
number of output tokens produced by the round. The conclusion therefore depends
on a provider's hypothetical verification cost: charging for all proposed
tokens makes break-even difficult, whereas a low flat verification fee makes it
possible at moderate acceptance rates.
\end{remark}

\subsection{Multi-Tenant Server Capacity}
The single-client per-token times above describe one edge--cloud pair. We now
turn to the server side: at a common required per-client output rate, how many
concurrent clients can the cloud sustain under each protocol?

\begin{proposition}[Server-capacity gain under multi-tenant serving]
\label{prop:throughput}
Let $N_{\mathrm{ar}}$, $N_{\mathrm{coloc}}$, and $N_{\mathrm{dsd}}$ be the
maximum numbers of clients a saturated, work-conserving server can sustain
under cloud AR, co-located SD, and DSD at a common per-client output rate $r$.
Assume \emph{cross-client overlap}: the server fills each client's edge-drafting
and network-transit phase with verification work for other clients, as in
batched cloud inference. Then
\begin{equation}
    N_{\mathrm{ar}} : N_{\mathrm{coloc}} : N_{\mathrm{dsd}}
    \;=\; 1 \;:\;
    \frac{\EE[A]\,\tar}{\gamma\,\tdraft+\tverify}
    \;:\; \frac{\EE[A]\,\tar}{\tverify}.
    \label{eq:capacity-ratio-general}
\end{equation}
Under the memory-bound condition $\tverify \approx \tar$, this reduces to
\begin{equation}
    N_{\mathrm{ar}} : N_{\mathrm{coloc}} : N_{\mathrm{dsd}}
    \;\approx\; 1 \;:\; \frac{\EE[A]}{1 + \gamma\,\tdraft/\tverify} \;:\; \EE[A],
    \label{eq:capacity-ratio}
\end{equation}
with $\EE[A]$ from~\eqref{eq:expected}. Because saturated aggregate throughput
is $Nr$ at the common rate, it scales by the same ratios.
\end{proposition}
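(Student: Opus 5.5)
The plan is a server-busy-time (work-conservation) accounting argument. For each protocol I will compute the server busy time needed per output token, then divide the server's unit capacity by the demand each client places on it. A saturated, work-conserving server delivers one unit of busy time per unit of wall time. A client sustaining long-run output rate $r$ therefore consumes $r\cdot s$ server-seconds per second, where $s$ is the expected server busy time per output token under that protocol. Capacity is then $N = 1/(r s)$, so the ratios in \eqref{eq:capacity-ratio-general} follow from the ratios of the $s$ values, and the common rate $r$ cancels.

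The per-token server times, using the same fixed contention-free $\tdraft$, $\tverify$, and $\tar$ as in Prop.~\ref{prop:coloc-dom}, are as follows.
\begin{itemize}
\item \emph{Cloud AR.} Each token costs one forward pass, so $s_{\mathrm{ar}}=\tar$.
\item \emph{Co-located SD.} Both drafting and verification occupy the server for a round of $\gamma\tdraft+\tverify$. The round yields $\EE[A]$ tokens in expectation, giving $s_{\mathrm{coloc}}=(\gamma\tdraft+\tverify)/\EE[A]$, exactly the numerator structure of \eqref{eq:coloc_eff}.
\item \emph{DSD.} Drafting runs on the edge, and by the cross-client overlap assumption the $\rtt+T_{\mathrm{tx}}(\gamma)$ transit and edge-drafting phases are filled with other clients' verification work. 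They therefore contribute no server busy time, and $s_{\mathrm{dsd}}=\tverify/\EE[A]$.
\end{itemize}
I will justify the $\EE[A]$ denominator over many rounds by a renewal-reward argument. Rounds are i.i.d. under the constant-acceptance model, so by the renewal-reward theorem the long-run tokens per round equal $\EE[A]$ from \eqref{eq:expected}.

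Taking $N\propto 1/s$ and normalizing by $N_{\mathrm{ar}}=1/(r\tar)$ gives $1:\EE[A]\tar/(\gamma\tdraft+\tverify):\EE[A]\tar/\tverify$, which is \eqref{eq:capacity-ratio-general}. Substituting $\tverify\approx\tar$ and factoring $\tverify$ out of the co-located denominator yields \eqref{eq:capacity-ratio}. The ratio $N_{\mathrm{dsd}}/N_{\mathrm{coloc}}=1+\gamma\tdraft/\tverify$ drops out directly. The throughput statement is immediate, since aggregate throughput is $Nr$ and $r$ is common to all three protocols.

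The main obstacle is the step from \emph{saturation} to \emph{sustainability}. A client in DSD has a per-round latency $\gamma\tdraft+\rtt+T_{\mathrm{tx}}(\gamma)+\tverify$ plus any queueing delay, so it cannot issue requests faster than that. I must check that rate $r$ is individually achievable, i.e.\ $r\le 1/T_{\mathrm{eff}}^{\mathrm{dsd}}$ (or the pipelined analogue), and that enough clients exist to keep the server busy. I will therefore state that $r$ is taken feasible for each client in isolation, and that cross-client overlap means the idle gaps are fully absorbed, so the binding constraint is server busy time and not client-side latency. Under this reading the bound $N\le 1/(rs)$ is attained. Per-round verification time is treated as a fixed $\tverify$ per client; batching effects that would change $\tverify$ are outside the model and would be substituted as configuration-specific values, as the paper stipulates.
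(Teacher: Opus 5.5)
Your proposal is correct and follows the paper's proof essentially step for step: normalize server occupancy to one, charge each client $r$ times its per-token server busy time ($\tar$, $(\gamma\tdraft+\tverify)/\EE[A]$, and $\tverify/\EE[A]$), invert to get each $N$, and normalize by $N_{\mathrm{ar}}$. Your renewal-reward justification of the $\EE[A]$ denominator and your explicit assumption that $r$ is achievable by each client in isolation are sensible additions that the paper leaves implicit.
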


\begin{proof}
Normalize the server's available occupancy to one. A cloud-AR client producing
$r$ tokens per second consumes occupancy $r\tar$, so
$N_{\mathrm{ar}}=1/(r\tar)$. A co-located SD round occupies the server for
$\gamma\,\tdraft+\tverify$ and yields $\EE[A]$ tokens, so each client consumes
$r(\gamma\,\tdraft+\tverify)/\EE[A]$ and
$N_{\mathrm{coloc}}=\EE[A]/[r(\gamma\,\tdraft+\tverify)]$. Under DSD, drafting
moves to the edge and a round occupies the server only for $\tverify$, giving
$N_{\mathrm{dsd}}=\EE[A]/(r\tverify)$. The DSD occupancy excludes edge drafting,
network transit, and transmission because cross-client overlap lets the server
process other clients during those phases. Normalizing the three client counts
by $N_{\mathrm{ar}}$ gives~\eqref{eq:capacity-ratio-general}; substituting
$\tverify\approx\tar$ gives~\eqref{eq:capacity-ratio}.
\end{proof}

Note that the overlap assumption in Prop. \ref{prop:throughput} is
\emph{across} clients --- batching independent clients' verifications --- and is
weaker than the \emph{within}-client pipelining of \S\ref{sec:pipelining}, which
fails in the WAN regime; it requires only enough concurrent clients to keep
verification work available during each idle phase. With partial overlap, the
DSD-over-co-located capacity factor $1+\gamma\tdraft/\tverify$ should be read
as an upper bound: realized gains degrade when the server cannot fully fill a
client's edge-drafting or network phase with verification work from other
clients. With a single client that condition is empty: the round takes
$\gamma\,\tdraft + \tverify$ under co-located SD versus $\gamma\,\tdraft +
\rtt + T_{\mathrm{tx}}(\gamma) + \tverify$ under DSD, so DSD merely produces the
same $\EE[A]$ tokens \emph{more slowly}.

\begin{remark}
Let $\rho=\tverify/\tar$. Equation~\eqref{eq:capacity-ratio-general} gives
$N_{\mathrm{dsd}}/N_{\mathrm{ar}}=\EE[A]/\rho$. The specialization
$N_{\mathrm{dsd}}/N_{\mathrm{ar}}\approx\EE[A]$ therefore requires
$\rho\approx1$. MagicDec shows that this condition depends on batch size and
context length: at large batch with short contexts, verification becomes
compute-bound and $\rho$ rises, reducing or reversing DSD's capacity advantage
over cloud AR~\citep{sadhukhan2025magicdec}. In the idealized compute-bound
limit $\rho\approx\gamma$, the ratio approaches $\EE[A]/\gamma$. DSD then
supports more clients than AR only in the high-acceptance corner
$\EE[A]>\gamma$, and even perfect acceptance gives at most
$(\gamma+1)/\gamma$. Beyond a hardware- and model-dependent critical sequence
length, KV-cache loading again dominates and $\rho\approx1$ can hold even at
large batch~\citep{sadhukhan2025magicdec}.

The comparison with co-located SD does not require
$\tverify\approx\tar$:
$N_{\mathrm{dsd}}/N_{\mathrm{coloc}}
=1+\gamma\,\tdraft/\tverify$. This factor approaches $1$ specifically when
$\tverify/(\gamma\,\tdraft)$ grows, rather than merely whenever $\tverify$
grows.
\end{remark}

\begin{remark}
Under the memory-bound case~\eqref{eq:capacity-ratio}, two readings
matter. First, $N_{\mathrm{dsd}}/N_{\mathrm{ar}} = \EE[A]$ while
$N_{\mathrm{dsd}}/N_{\mathrm{coloc}} = 1 + \gamma\,\tdraft/\tverify$:
co-located SD alone already captures most of the capacity gain
over cloud AR, and distributing the draft adds only the
$(1 + \gamma\,\tdraft/\tverify)$ factor on top. Second, this distribution
factor requires enough concurrent clients for cross-client overlap; with a
single client the condition is absent, and DSD merely adds network latency over
co-located SD, as stated in Prop.~\ref{prop:coloc-dom}.
\end{remark}

Published multi-tenant results support both comparisons in this remark.
SLED~\citep{li2025sled} reports a $2.2\times$ system-throughput improvement
for its DSD framework over cloud AR, consistent with the memory-bound
$N_{\mathrm{dsd}}/N_{\mathrm{ar}}\approx\EE[A]$ branch at typical
$\EE[A]\approx 2$. SpecEdge~\citep{park2025specedge} instead compares
against co-located SD: offloading tree drafting raises server
throughput by $2.22\times$, illustrating the separate
$N_{\mathrm{dsd}}/N_{\mathrm{coloc}}=1+\gamma\tdraft/\tverify$ branch at
a draft-heavy operating point. CoSine~\citep{gao2025cosine} gives complementary
datacenter evidence that separating draft and verification resource demands can
improve throughput. These measurements support the direction of the capacity
result; they do not establish a universal numerical factor because the systems
use different models, batching policies, drafting mechanisms, and interconnect
regimes.

\begin{remark}
For inference service providers, co-located SD has lower latency and
communication than DSD and matches DSD on FLOPs and memory by
Prop.~\ref{prop:coloc-dom}; DSD reduces latency over cloud AR only in the
bounded window of Prop.~\ref{prop:lat-ar}.
Closed-API end users instead face the API limitation in Rem.~\ref{rem:api-ar}.
Under multi-tenant serving, speculation alone already gives co-located SD a
large capacity gain over cloud AR --- the server verifies $\EE[A]$ tokens per
forward pass instead of generating one. DSD's additional capacity benefit over co-located SD is the
$(1 + \gamma\,\tdraft/\tverify)\times$ more clients supported by offloading
drafting from the server. Prop.~\ref{prop:throughput} shows that this factor is governed entirely by the
on-server draft cost $\gamma\,\tdraft/\tverify$, significant only when drafting
is expensive, and shrinking as native multi-token
prediction cheapens on-server drafting; see \S\ref{sec:pipelining}. For a
single request, synchronous DSD is slower than co-located SD, and pipelined
DSD remains no faster in the WAN regime of Prop.~\ref{prop:pipe-coloc}.
Under sufficient cross-client overlap, DSD's modeled advantage is instead
greater server capacity.
\end{remark}

\section{What Pipelining Changes (And What It Doesn't)}
\label{sec:pipelining}
The latency part of Prop.~\ref{prop:coloc-dom} assumes synchronous DSD.
A natural objection is that an asynchronous protocol could hide the RTT behind
edge drafting and thereby overturn DSD's latency disadvantage relative to
co-located SD. Equations~\eqref{eq:dist_eff} and~\eqref{eq:pipe_eff}
make this distinction explicit: synchronous DSD adds edge drafting, network,
and verification time, whereas pipelined DSD takes the maximum of the
overlapped drafting and cloud-side branches. Most DSD systems we discuss,
including DSSD~\citep{ning2025dssd} and DSD~\citep{yu2025dsd}, use the
synchronous structure already covered above. PicoSpec~\citep{zhang2026picospec}
and PipeSD~\citep{han2026pipesd} are recent asynchronous cases, so the following proposition tests whether this
stronger pipelined variant changes the latency ordering in the edge-cloud WAN
regime. Here $T_{\mathrm{round}}^{\mathrm{pipe}}$ and
$T_{\mathrm{round}}^{\mathrm{coloc}}$ denote per-round wall times,
$T_{\mathrm{round}}^{X}=T_{\mathrm{eff}}^{X}\EE[A]$; $w$ is the speculative-waste
fraction from~\eqref{eq:pipe_eff}; and the low-transmission-overhead regime is
defined in~\S\ref{sec:background}. The condition $\rtt\ge\gamma\tdraft$
describes the case where one cloud round trip takes at least as long as drafting
one full speculation round on the edge. In this case, the pipeline cannot hide
the round trip inside edge drafting; the network branch is at least as large as
the co-located drafting branch before verification is added.

\begin{proposition}[DSD with pipelining does not dominate co-located SD in the WAN regime]
\label{prop:pipe-coloc}
In the low-transmission-overhead model obtained by setting
$T_{\mathrm{tx}}(\gamma)=0$ and for $\rtt \ge \gamma \tdraft$,
$T_{\mathrm{round}}^{\mathrm{pipe}} \ge
T_{\mathrm{round}}^{\mathrm{coloc}}$
\end{proposition}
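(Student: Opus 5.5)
The plan is to compare the two per-round wall times directly, using the definitions $T_{\mathrm{round}}^{X}=T_{\mathrm{eff}}^{X}\,\EE[A]$ together with \eqref{eq:coloc_eff} and \eqref{eq:pipe_eff}. Multiplying each effective time by $\EE[A]$ cancels the common denominator, which is legitimate because both configurations share the same draft and target models and hence the same $\alpha$ and $\EE[A]$. With $T_{\mathrm{tx}}(\gamma)=0$ this gives
\[
T_{\mathrm{round}}^{\mathrm{coloc}}=\gamma\tdraft+\tverify,\qquad
T_{\mathrm{round}}^{\mathrm{pipe}}=\max\bigl((1+w)\gamma\tdraft,\;\rtt+\tverify\bigr).
\]

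Next I lower-bound the maximum by its second argument:
\[
T_{\mathrm{round}}^{\mathrm{pipe}}\ge \rtt+\tverify.
\]
I then apply the hypothesis $\rtt\ge\gamma\tdraft$ to obtain
\[
\rtt+\tverify\ge\gamma\tdraft+\tverify=T_{\mathrm{round}}^{\mathrm{coloc}},
\]
which closes the chain. The first argument of the max, and hence the waste fraction $w\in[0,1]$, is never needed. The inequality therefore holds uniformly in $w$, even under perfect overlap with $w=0$.

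I will add two remarks. First, equality requires both $\rtt=\gamma\tdraft$ and that the network branch attains the max. When $\rtt>\gamma\tdraft$ the inequality is strict, because $\tverify$ appears additively in both expressions. Second, the argument uses the same fixed, contention-free $\tverify$ in both placements, per the modeling convention of \S\ref{sec:background}. If placement-specific service times were substituted, the comparison would need $\tverify^{\mathrm{cloud}}\ge\tverify^{\mathrm{coloc}}$ or an adjusted hypothesis.

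There is no real technical obstacle; the only care needed is in the modeling. The key point to state explicitly is that in \eqref{eq:pipe_eff} verification sits on the cloud-side branch alongside $\rtt$. Pipelining can therefore only hide the drafting time, never the verification time. Co-located SD's round is drafting plus verification, so whenever the round trip alone is at least as long as the drafting it replaces, the cloud branch already matches the co-located round.
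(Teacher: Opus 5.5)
Your proof is correct and matches the paper's: both cancel $\EE[A]$ to get the two per-round times, bound the max from below by the cloud branch $\rtt+\tverify$, and apply $\rtt\ge\gamma\tdraft$, so $w$ never enters. Your strictness remark for $\rtt>\gamma\tdraft$ and your caveat about placement-specific $\tverify$ are valid additions.
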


\begin{proof}
Write $T_{\mathrm{round}}^{X} \triangleq
T_{\mathrm{eff}}^{X} \EE[A]$ for per-round wall time under
configuration $X$. From~\eqref{eq:pipe_eff} and~\eqref{eq:coloc_eff}, after
setting $T_{\mathrm{tx}}(\gamma)=0$,
\begin{align}
T_{\mathrm{round}}^{\mathrm{pipe}}
  &= \max((1+w)\gamma\tdraft,\; \rtt + \tverify), \\
T_{\mathrm{round}}^{\mathrm{coloc}}
  &= \gamma\tdraft + \tverify .
\end{align}
If $\rtt \ge \gamma\tdraft$, then the cloud-side branch satisfies
$\rtt + \tverify \ge \gamma\tdraft + \tverify =
T_{\mathrm{round}}^{\mathrm{coloc}}$. Since
$T_{\mathrm{round}}^{\mathrm{pipe}}$ is the maximum of this branch and the
draft branch, $T_{\mathrm{round}}^{\mathrm{pipe}} \ge
T_{\mathrm{round}}^{\mathrm{coloc}}$.
\end{proof}

PipeSD~\citep{han2026pipesd} overlaps edge draft generation and communication
using a dynamic-programming token-batch scheduler and a dual-threshold
verification trigger. Its $1.16$--$2.16\times$ speedups are measured against
other DSD baselines such as Vanilla, HSL, and
EdgeLLM, rather than against co-located SD. Therefore they support the claim that
pipelining improves DSD implementations without changing 
Prop.~\ref{prop:pipe-coloc}.

Using the same illustrative values as Table~\ref{tab:scenarios},
$\gamma = 5$ and $\tdraft = 10$~ms, we have $\gamma\tdraft = 50$~ms. In
this case, the condition
$\rtt \ge \gamma\tdraft$ in Prop.~\ref{prop:pipe-coloc} holds at both
illustrative operating points of \S\ref{sec:position}, the 4G RTT of
${\sim}60$~ms and the cross-region RTT of ${\sim}80$~ms.

Low RTT is necessary but not sufficient for pipelined DSD to outperform
co-located SD. From Eqs.~\eqref{eq:pipe_eff} and~\eqref{eq:coloc_eff},
pipelined DSD is strictly faster iff
\[
\rtt + T_{\mathrm{tx}}(\gamma) < \gamma\tdraft
\quad\text{and}\quad
w\gamma\tdraft < \tverify.
\]
The first condition requires communication to fit within one edge drafting
round. The second requires the cost of wasted pre-drafting to remain below the
verification time. Nearby WiFi, metro-edge, or favorable 5G links may satisfy
these conditions. Prop.~\ref{prop:pipe-coloc} rules out a latency advantage
whenever $\rtt \ge \gamma\tdraft$.

Two recent systems exploit the $\rtt < \gamma\tdraft$ regime, but only
inside the datacenter. CoSine~\citep{gao2025cosine} splits the matrix-matrix
multiplication and the matrix-vector multiplication parts of
LLM inference computation across heterogeneous accelerators connected by a 10~Gbps
datacenter network. DSI~\citep{timor2024dsi} runs speculative branches in
parallel across multiple GPUs and proves that this datacenter-parallel schedule
is no slower than non-speculative decoding, and no slower in expectation than
standard speculative decoding. Both are datacenter-internal parallel-inference
systems operating at sub-millisecond RTT. They are complementary to our scope,
rather than latency targets for our WAN comparison, because CoSine relies on a
fast datacenter interconnect and DSI on abundant parallel accelerators. Neither
removes the WAN round trip. CoSine nevertheless supports our capacity argument
by showing that drafting and verification can contend for different accelerator
resources under load, and that separating their execution across heterogeneous
resources can improve serving throughput.

SpecEdge~\citep{park2025specedge} is consistent with the
$\rtt < \gamma\tdraft$ branch for nearby WiFi, metro-edge, or favorable 5G
links. In Prop.~\ref{prop:pipe-coloc}, $\gamma\tdraft$ is the wall-clock time
spent drafting one speculation round. SpecEdge uses tree-based drafting with
adaptive depth, so the analogous quantity is the edge draft-phase time, estimated
as draft depth times per-pass draft latency. In its 32B/1.7B calibration, verification averages
$94.2$~ms and each draft forward pass takes about $11$~ms. SpecEdge therefore
uses depths $7$, $5$, and $4$ at RTTs of $15$, $40$, and $50$~ms, giving edge
draft times of about $77$, $55$, and $44$~ms. Thus the low-RTT operating points
satisfy the same timing condition as $\rtt < \gamma\tdraft$, whereas the
$50$~ms point is already at the boundary where RTT is no longer hidden by edge
drafting. This matches its reported latency behavior: at low RTT, SpecEdge
reports $36.5$~ms inter-token latency versus $42.4$~ms for co-located SD, while
its sensitivity analysis places the crossover near $50$~ms RTT; at $65$~ms RTT,
SpecEdge reaches $44.5$~ms and is slower than the co-located baseline.

Two effects narrow DSD's remaining capacity advantage. First,
MagicDec observes that verification can become compute-bound in large-batch,
short-context settings~\citep{sadhukhan2025magicdec}. In that regime,
$\tverify/\tar$ rises, so Prop.~\ref{prop:throughput} predicts a smaller DSD
capacity gain over cloud AR. Second, native
multi-token prediction in frontier models such as DeepSeek-V3~\citep{deepseek2024v3}
and Qwen3-Next~\citep{qwen2025qwen3next} cheapens co-located proposal
generation. For the comparison with co-located SD, these target-side proposal
heads replace the separate draft-model cost $\gamma\tdraft$ with a smaller
proposal overhead on the target. As that overhead falls, the capacity gain from
moving draft work to the edge moves closer to no gain.


\section{Conclusion}
\label{sec:conclusion}
Against co-located SD, synchronous DSD has higher per-request latency and
communication with matching per-output compute and total model-weight memory
by Prop.~\ref{prop:coloc-dom}. Under the low-transmission-overhead model,
pipelined DSD also cannot reduce latency when
$\rtt \ge \gamma\tdraft$, by Prop.~\ref{prop:pipe-coloc}. Against cloud AR,
DSD reduces latency only within the bounded, $\tar$-dependent window of
Prop.~\ref{prop:lat-ar}; see Table~\ref{tab:scenarios}.
DSD is also infeasible against closed-source target LLM APIs.
The main case for DSD is multi-tenant capacity: under cross-client
overlap, offloading draft compute lets a saturated server sustain
$(1 + \gamma\,\tdraft/\tverify)\times$ more clients at the same per-client
output token rate by Prop.~\ref{prop:throughput}. Thus DSD should be evaluated primarily by multi-tenant capacity and server
throughput, rather than by single-request latency alone. Latency claims should report the RTT range where they hold and whether
the baseline is cloud AR or co-located SD. DSD's remaining capacity
advantage becomes smaller
in regimes where verification is compute-bound, as observed for large-batch,
short-context settings, and when native multi-token prediction is adopted in
target models that reduce co-located drafting cost.

Our reframing implies three reporting practices. First, \emph{report the
break-even RTT of Prop.~\ref{prop:lat-ar} together with its parameter setting},
so a reader can tell whether a system applies to their deployment. Second,
\emph{sweep $(\alpha, \rtt, \gamma)$ at
several target speeds $\tar$ rather than reporting a single operating point}: the
viable region is a surface, not a point, and single-condition speedups are easily
cherry-picked. Third, \emph{report multi-tenant server throughput, not just
single-user speedup}, and \emph{specify the communication protocol} (greedy
vs.\ logit-based), whose payload differs by orders of magnitude and sets the
latency window.
Throughout, \emph{co-located SD---not cloud AR---is the baseline to beat}
by Prop.~\ref{prop:coloc-dom}.

We leave to future work several deployment-level concerns where DSD may add value
but where no controlled comparison yet exists: user-prompt privacy (standard DSD
still sends the prefix to the cloud, though the architecture admits encrypted or
partial-prefix variants), per-user personalized draft models at scale (where
co-located SD would require the server to hold $N$ draft models), perceived
time-to-first-token from showing unverified drafts, and graceful degradation
under intermittent connectivity. These are legitimate directions, but orthogonal
to the latency claims the literature currently makes.

\bibliographystyle{IEEEtran}
\bibliography{references}

\end{document}